\newcommand{\AlgS}{$\mathcal{A}_1$\xspace}
\newcommand{\AlgLS}{$\mathcal{A}_{2a}$\xspace}
\newcommand{\AlgLL}{$\mathcal{A}_{2b}$\xspace}
\spnewtheorem{fact}[theorem]{Fact}{\bfseries}{\itshape}
\newcommand{\no}[1]{}
\title{Dictionary matching in a stream}
\author{Rapha\"el Clifford\inst{1}
    \and Allyx  Fontaine\inst{1}
    \and Ely Porat\inst{2} 
    \and \\ Benjamin Sach\inst{1}
    \and Tatiana Starikovskaya\inst{1}
}
\institute{University of Bristol, Department of Computer Science, Bristol, U.K.
 \and Bar-Ilan University, Department of Computer Science, Israel}
\date{}
\begin{document}

\maketitle

\begin{abstract}
We consider the problem of dictionary matching in a stream. Given a
set of strings, known as a dictionary, and a stream of characters
arriving one at a time, the task is to report each time some string in
our dictionary occurs in the stream. We present a randomised algorithm
which takes $O(\log{\log(k+m)})$ time per arriving character and uses
$O(k \log{m})$ words of space, where $k$ is the number of strings in
the dictionary and $m$ is the length of the longest string in the
dictionary.
\end{abstract}

\section{Introduction}

We consider the problem of dictionary matching in a stream. Given a
set of strings, known as a dictionary, and a stream of characters
arriving one at a time, the task is to determine when some string in
our dictionary matches a suffix of the growing stream.  The dictionary
matching problem models the common situation where we are interested
in not only a single pattern that may occur but in fact a whole set of
them.

Dictionary matching is considered one of the classic and most widely
studied problems within the field of combinatorial pattern
matching. The original solution of Aho and Corasick~\cite{Aho:1975}
has, for example, been cited over $2800$ times. The dictionary problem
along with its efficient solutions also admit a very wide range of
practical applications: from searching for DNA sequences in genetic
databases~\cite{SE:2005} to intrusion detection~\cite{TSCV:2004} and
many more. The dictionaries that are used in these applications are
often also very large as they may contain all strings within a
neighbourhood of some seed for example, or even all strings in a
language defined by a particular regular expression. As a result,
there is a pressing need for methods which are not only fast but also
use as little space as possible.

The solutions we present will be analysed under a particularly strong
model of space usage. We will account for all the space used by our
algorithm and will not, for example, even allow ourselves to store a
complete version of the input. In particular, we will neither be able
to store the whole of the dictionary nor the streaming text.  We now
define the problem which will be the main object of study for this
paper more formally.

\begin{problem}
In the dictionary-matching problem we have a set of patterns $\mathcal
P$ and a streaming text $T=t_1\dots t_n$ which arrives one character
at a time. We must report all positions in $T$ where there exists a
pattern in $\mathcal P$ which matches exactly. More formally, we
output all the positions $x$ such that there exists a pattern $P_i \in
\mathcal P$ with $t_{x-|P_i|+1}\dots t_{x}=P_i$. We must report an
occurrence of some pattern in $\mathcal P$ as soon as it occurs and
before we can process the subsequent arriving character.
\end{problem}

If all the patterns in the text had the same length $m$ then we could
straightforwardly deploy the fingerprinting method of Karp and Rabin
\cite{KR:1987} to maintain a fingerprint of a window of length $m$
successive characters of the text. We can then compare this for each
new character that arrives to a hash table of stored fingerprints of
the patterns in the dictionary.  In our notation this approach would
require $O(k + m)$ words of space and constant time per
arrival. However if the patterns are not all the same length this
technique no longer works.

For a single pattern, Porat and Porat~\cite{Porat:09} showed that it
is possible to perform exact matching in a stream quickly using very
little space. To do this they introduced a clever combination of the
randomised fingerprinting method of Karp and Rabin and the
deterministic and classical KMP algorithm~\cite{Knuth:1977}.  Their
method uses $O(\log{m})$ words of space and takes $O(\log{m})$ time
per arriving character where $m$ is the length of the single pattern.
Breslauer and Galil subsequently made two improvements to this
method. First, they sped up the method to only require $O(1)$ time per
arriving character and they also showed that it was possible to
eliminate the possibility of false negatives, which could occur using
the previous approach~\cite{BG:2014}.

Our solution takes the single-pattern streaming algorithm of Breslauer
and Galil~\cite{BG:2014} as its starting point.  If we were to run
this algorithm independently in parallel for each separate string in
the dictionary, this would take $O(k)$ time per arriving character and
$O(k\log{m})$ words of space.  Our goal in this paper is to reduce the
running time to as close to constant as possible without increasing
the working space.  Achieving this presents a number of technical
difficulties which we have to overcome.

The first such hurdle is how to process patterns of different lengths
efficiently. In the method of Breslauer and Galil prefixes of power of
two lengths are found until either we encounter a mismatch or a match
is found for a prefix of length at least half of the total pattern
size.  Exact matches for such long prefixes can only occur rarely and
so they can afford to check each one of these potential matches to see
if it can be extended to a full match of the pattern.  However when
the number of patterns is large we can no longer afford to inspect
each pattern every time a new character arrives.

Our solution breaks down the patterns in the dictionary into three
cases: short patterns, long patterns with short periods, long patterns
with long periods.  A key conceptual innovation that we make is a
method to split the patterns into parts in such a way that matches for
all of these parts can be found and stitched together at exactly the
time they are needed. We achieve this while minimising the total
working space and taking only $O(\log{\log(k+m)})$ time per arriving
symbol.

A straightforward counting argument tells us that any randomised
algorithm with inverse polynomial probability of error requires at
least $\Omega(k \log n)$ \emph{bits} of space, see for example
\cite{BroderM:2003}. Our space requirements are therefore within a
logarithmic factor of being optimal. However, unlike the
single-pattern algorithm of Breslauer and Galil, our dictionary
matching algorithm can give both false positives and false negatives
with small probability.

Throughout the rest of this paper, we will refer to the arriving text
character as the arrival.  We can now give our main new result which
will be proven in the remaining parts of this paper.
\begin{theorem}
 Consider a dictionary $\mathcal P$ of $k$ patterns of size at most
 $m$ and a streaming text $T$. The streaming dictionary matching
 problem can be solved in $O(\log{\log(k+m)})$ time per arrival and
 $O(k \log{m})$ words of space.  The probability of error is $O(1/n)$
 where $n$ is the length of the streaming text.  \label{thm:main}
\end{theorem}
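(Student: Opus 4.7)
The plan is to partition the dictionary $\mathcal{P}$ by pattern length into short patterns (of length at most some threshold $\tau$ polylogarithmic in $k+m$) and long patterns, and then subdivide the long patterns further according to whether the pattern has a short or a long period. For each of the three resulting classes I would design a specialised streaming sub-algorithm: \AlgS for short patterns, \AlgLS for long patterns of small period, and \AlgLL for long patterns of large period. The overall algorithm runs the three sub-algorithms in parallel on the stream, outputs the union of their reports, and pays per arrival the maximum of the individual costs. The space bound $O(k\log m)$ follows by showing each sub-algorithm stores at most $O(\log m)$ words per pattern it owns, and the $O(1/n)$ error probability follows by a union bound over all $n$ arrivals after choosing the fingerprint modulus of $\Theta(\log n)$ bits.

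For \AlgS I would keep Karp--Rabin fingerprints of all length-$\ell$ suffixes of the current stream for the $O(\log \tau)$ relevant values of $\ell$ (the distinct lengths of short patterns, bucketed geometrically) and look each fingerprint up in a static hash table storing fingerprints of the short patterns. With $O(\log\log \tau)$ or even $O(1)$ hash lookups per arrival via appropriate dictionaries, this fits the budget easily. For \AlgLS and \AlgLL I would adapt the Breslauer--Galil framework: for each long pattern maintain matches of prefixes of power-of-two lengths (``levels'') and propagate matches upward through the levels as the stream grows. The key new idea is to cut each long pattern into $O(\log m)$ pieces so that (i) matches of the pieces can be detected using only $O(k)$ fingerprints across the whole dictionary, and (ii) confirmed piece-matches can be stitched into full pattern matches at precisely the arrival where stitching is required. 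Stitching events are retrieved with a predecessor data structure such as a y-fast trie over event times, which accounts for the $O(\log\log(k+m))$ per-arrival cost.

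The main obstacle will be scheduling the verification work so that the $O(\log\log(k+m))$ bound holds in the worst case and not merely on average. Many intermediate piece-matches could in principle become ``due'' at the same arrival; to avoid this I would exploit the structural distinction between \AlgLS and \AlgLL. In \AlgLL the long-period case, a standard non-overlap lemma (derived from the Fine and Wilf theorem) guarantees that only a constant number of candidate occurrences of any single long-period pattern can be active in any window of length proportional to the pattern length, which caps the number of simultaneous stitching events and lets us distribute verifications across future arrivals using a small per-pattern buffer. In \AlgLS the short-period case, overlapping occurrences can be represented implicitly as arithmetic progressions determined by the period, so a single data-structure update suffices per arrival and per pattern involved.

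Finally I would combine the three analyses: summing the level-wise storage and the piece/fingerprint tables gives $O(k\log m)$ words of space; the per-arrival time is dominated by the predecessor query in the stitching data structure, yielding $O(\log\log(k+m))$; and fingerprint collisions over the at most $O(nk\log m)$ comparisons performed during the stream can be driven below $1/n$ by the standard choice of a random prime of length $\Theta(\log n)$ bits, completing the proof of Theorem~\ref{thm:main}.
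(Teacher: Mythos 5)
Your three-way decomposition of the dictionary (short patterns; long patterns with short period; long patterns with long period) matches the paper's structure, and you correctly identify the role of arithmetic progressions in the short-period case and of match sparsity in the long-period case. However, there are concrete gaps in how you would realise each sub-algorithm.

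First, your choice of a \emph{polylogarithmic} threshold $\tau$ for ``short'' is wrong for this space budget; the paper uses $2k\log m$, and that choice is load-bearing. With a polylog $\tau$ the ``long'' class would contain patterns of length as small as $\mathrm{polylog}(k+m)$, for which the level/period machinery of $\mathcal{A}_{2a}$ and $\mathcal{A}_{2b}$ has no slack to amortise or round-robin (it relies on $Q_i$ having length at least $k\log m$). Second, ``bucketing the distinct short lengths geometrically'' does not give exact matching: a fingerprint of a length-$\ell$ suffix can only be compared to a pattern of length exactly $\ell$, so geometric buckets save nothing, and with up to $k$ distinct lengths a direct hash-lookup scheme would cost $\Theta(k)$ (or at best $\Theta(\tau)$) per arrival. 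The paper gets $O(\log\log(k+m))$ for $\mathcal{A}_1$ by splitting each pattern into a head and an $O(\log m)$-length tail, building a $z$-fast trie on the \emph{reversed heads}, de-amortising one trie query over $\log m$ arrivals, and answering each arrival with a coloured-ancestor query keyed by the fingerprint of the candidate tail. Your proposal has no analogue of this. Third, for $\mathcal{A}_{2b}$ you gesture at ``propagating matches upward through levels'' and a ``y-fast trie over event times,'' but you never say how to avoid touching all $O(\log m)$ levels each arrival. The paper's actual device is to split the power-of-two prefixes $P_{i,j}$ by period: short-period prefixes are collected into a dictionary $D_1$ fed back into a modified $\mathcal{A}_{2a}$, the first long-period prefixes into $D_2$ fed into $\mathcal{A}_1$, and the remaining long-period levels are updated in a strict round-robin using the fact that each can match at most once per $k\log m$ arrivals; Lemma~\ref{lem:equalperiods}, Corollary~\ref{cor:equalperiods}, and Lemma~\ref{lem:changeofperiod} are needed to make this scheduling correct. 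Finally, removing the resulting $k\log m$-character reporting delay and identifying \emph{which} pattern actually matched again uses a coloured-ancestor structure over a trie of reversed $Q_i$'s rather than a predecessor structure over event times. Without these ingredients the per-arrival bound would not hold in the worst case.
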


\subsection{Related work}

The now standard offline solution for dictionary matching is based on
the Aho-Corasick algorithm \cite{Aho:1975}. Given a dictionary
$\mathcal{P}=\{P_1,P_2,\dots,P_k\}$, and a text $T=t_1 \dots t_n$, let
$\mbox{occ}$ denote the number of matches and $M$ denote the sum of
the lengths of the patterns in $\mathcal{P}$, that is
$M=\sum_{i=1}^{k}|P_i|$. The Aho-Corasick algorithm finds all
occurrences of elements in $\mathcal{P}$ in the text $T$ in
$O(M+n+\mbox{occ})$ time and $O(M)$ space.  Where the dictionary is
large, the space required by the Aho-Corasick approach may however be
excessive.

There is now an extensive literature in the streaming model.  Focusing
narrowly on results related to the streaming algorithm of Porat and
Porat~\cite{Porat:09}, this has included a form of approximate
matching called parameterised matching~\cite{JPS:2013}, efficient
algorithms for detecting periodicity in streams~\cite{EJS:2010} as
well as identifying periodic trends~\cite{CM:2011}. Fast deterministic
streaming algorithms have also been given which provided guaranteed
worst case performance for a number of different approximate pattern
matching problems~\cite{CS:2010,CS:2011} as well as pattern matching
in multiple streams~\cite{CJPS:2012:multiple}.

The streaming dictionary matching problem has also been considered in
a weaker model where the algorithm is allowed to store a complete
read-only copy of the pattern and text but only a constant number of
extra words in working space.  Breslauer, Grossi and Mignosi
\cite{BGM:2011} developed a real-time string matching algorithm in
this model by building on previous work of Crochemore and Perrin
\cite{CP:1991}. The algorithm is based on the computation of periods
and critical factorisations allowing at the same time a forward and a
backward scan of the text.

\subsection{Definitions}

We will make extensive use of Karp-Rabin fingerprints~\cite{KR:1987}
which we now define along with some useful properties.

\begin{definition}{Karp-Rabin fingerprint function $\phi$.}
Let $p$ be a prime and $r$ a random integer in $\mathbb{F}_p$. We
define the fingerprint function $\phi$ for a string $S=s_1 \dots
s_\ell$ such that:
\begin{center}
 $\phi(S)=\sum_{i=1}^{\ell} s_ir^i ~mod~ p$.
\end{center}
\end{definition}

The most important property is that for any two equal length strings
$U$ and $V$ with $U \neq V$, the probability that $\phi(U) = \phi(V)$
is at most $1/n^2$ if $p>n^3$. We will also exploit several well known
arithmetic properties of Karp-Rabin fingerprints which we give in
Lemma~\ref{lem:addsubphi}. All operations will be performed in the
word-RAM model with word size $\Theta(\log{n})$.

\begin{lemma}\label{lem:addsubphi}
Let $U$ be a string of size $\ell$ and $V$ another string, then:
\begin{itemize}
\item $\phi(UV) = \phi(U) + r^{\ell} \phi(V)~ mod~p$,
\item $\phi(U) =  \phi(UV) - r^{\ell} \phi(V)~ mod~p$,
\item $\phi(V) =  r^{-\ell}(\phi(UV) - \phi(U))~ mod~p$.
\end{itemize}
\end{lemma}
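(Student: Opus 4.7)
The proof is purely routine algebraic manipulation directly from the definition of $\phi$, and the three bullet points are tightly linked: once I prove the first, the other two fall out by rearranging.

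My plan is to start with the first identity. Let $U = u_1 \dots u_\ell$ have length $\ell$ and let $V = v_1 \dots v_m$. Writing $UV = w_1 \dots w_{\ell+m}$ with $w_i = u_i$ for $i \le \ell$ and $w_{\ell+j} = v_j$ for $1 \le j \le m$, I split the fingerprint sum at position $\ell$:
\begin{align*}
\phi(UV) &= \sum_{i=1}^{\ell+m} w_i r^i \;=\; \sum_{i=1}^{\ell} u_i r^i \;+\; \sum_{i=\ell+1}^{\ell+m} v_{i-\ell}\, r^i \pmod p.
\end{align*}
The first sum is $\phi(U)$ by definition. In the second sum I reindex with $j = i - \ell$ and factor out $r^\ell$, obtaining $r^\ell \sum_{j=1}^{m} v_j r^j = r^\ell \phi(V) \pmod p$. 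Adding the two gives the first identity.

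For the second bullet I subtract $r^\ell \phi(V)$ from both sides of the first identity modulo $p$. For the third bullet I subtract $\phi(U)$ from both sides and then multiply by $r^{-\ell}$, which exists because $r \in \mathbb{F}_p^\times$ (we may assume $r \ne 0$; if needed this can be folded into the definition since the probability $r = 0$ is negligible). All arithmetic is in $\mathbb{F}_p$ and every operation fits in $O(1)$ words in the word-RAM model with word size $\Theta(\log n)$, as assumed.

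There is no real obstacle here; the only point worth a sentence of care is the invertibility of $r^\ell$ needed for the third identity, which is why we work in $\mathbb{F}_p$ rather than $\mathbb{Z}/p\mathbb{Z}$ without restriction. The length-based shift factor $r^\ell$ is precisely what makes fingerprints align when strings are concatenated, and this is what will later let us update fingerprints of sliding windows in constant time throughout the algorithm.
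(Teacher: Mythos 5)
Your verification is correct and is the standard one; the paper in fact states Lemma~\ref{lem:addsubphi} without proof, treating these identities as well-known properties of Karp--Rabin fingerprints, so your split-and-reindex derivation of the first identity followed by rearrangement for the other two is exactly the argument the paper is implicitly relying on. The remark about needing $r$ invertible (i.e.\ $r \neq 0$) for the third identity is a fair point of care that the paper glosses over by choosing $r$ ``random in $\mathbb{F}_p$''; in practice one either draws $r$ from $\mathbb{F}_p^\times$ or absorbs the negligible $r=0$ event into the overall error probability, as you note.
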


For a non-empty string $x$, an integer $p$ with $0 < p \leq |x|$ is
called a \emph{period} of $x$ if $x_i = x_{i+p}$ for all $i \in
\{1,\dots, |x|-p-1\}$. The period of a non-empty string $x$ is simply
the smallest of its periods.  We will also assume that all logarithms
are base 2 and are rounded to the nearest integer.

We describe three algorithms: $\mathcal{A}_1$ in Section
\ref{sec:shortpatterns} which handles short patterns in the
dictionary, and $\mathcal{A}_{2a}$ and $\mathcal{A}_{2b}$ in Section
\ref{sec:longpatterns} which deal with the long
patterns. Theorem~\ref{thm:main} is obtained by running all three
algorithms simultaneously.

\section{Algorithm $\mathcal{A}_{1}$: Short patterns}\label{sec:shortpatterns}
\begin{lemma}
\label{lm:alg3}
There exists an algorithm $\mathcal{A}_1$ which solves the streaming
dictionary matching problem and runs in $O(\log \log (k+m))$ time per
arrival and uses $O(k \log m)$ space on a dictionary of $k$ patterns
whose maximum length is at most $2k\log{m}$.
\end{lemma}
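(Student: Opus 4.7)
Because every pattern has length at most $M := 2k\log m$, the plan is to explicitly maintain a rolling window of the last $M$ characters of the stream, along with the prefix fingerprints $F(x-j) := \phi(t_1\cdots t_{x-j})$ and the inverse powers $r^{-j}\bmod p$ for $j = 0,1,\ldots,M$. These auxiliary arrays use $O(M) = O(k\log m)$ words in total. Each pattern $P_i$ is represented only by its pair $(|P_i|, \phi(P_i))$ in a hash table keyed by length, costing a further $O(k)$ words. By Lemma~\ref{lem:addsubphi}, for any length $\ell \le M$, the fingerprint $\phi(t_{x-\ell+1}\cdots t_x)$ can be recovered in $O(1)$ time from $F(x)$, $F(x-\ell)$, and $r^{-(x-\ell)}$; so given a single length $\ell$, one hash-table lookup suffices to test all patterns of that length against the current stream suffix in $O(1)$ expected time.

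The principal obstacle is that the dictionary may contain up to $k$ distinct pattern lengths, while we can afford only $O(\log\log(k+m))$ time per arrival, ruling out any iteration over all lengths. My plan to overcome this is to attach to every pattern $P_i$ the fingerprints of its $O(\log m)$ suffixes of power-of-two lengths; across all patterns this adds only $O(k\log m)$ words, within budget. On the stream side, I would also maintain the fingerprints of its suffixes at all power-of-two lengths $1, 2, 4, \ldots, 2^{\lfloor\log M\rfloor}$, each refreshable in $O(1)$ time per arrival using Lemma~\ref{lem:addsubphi}. Grouping patterns by length class $\lfloor \log |P_i|\rfloor$ and indexing each class by its power-of-two suffix signatures in a y-fast trie yields a predecessor structure supporting $O(\log\log(k+m))$-time lookup. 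For each arrival, a lookup in the appropriate class's trie, keyed by the current stream's matching power-of-two suffix fingerprint, identifies candidate patterns; each survivor is then verified in $O(1)$ time via a full-length fingerprint comparison.

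The hard step I anticipate is arguing that the per-arrival cost is $O(\log\log(k+m))$ rather than $O(\log m \cdot \log\log(k+m))$, since naively one would have to query one trie per length class. My plan here is to consolidate all length classes into a single predecessor structure keyed by (length class, signature) pairs, and to show via a careful amortisation (using the fact that only $O(1)$ power-of-two suffix fingerprints can change in a useful way per arrival) that only $O(1)$ classes need non-trivial work per arrival. Soundness of the pruning is immediate, since a genuine match must agree with the stream on every power-of-two suffix length, so its signature always survives. Once the combinatorial claims are in place, the error bound follows from standard Karp--Rabin analysis: taking the prime $p > n^{c}$ for a sufficiently large constant $c$ bounds each collision probability by $1/n^{c}$, and a union bound over $n$ arrivals and $O(k)$ fingerprint checks yields total error probability $O(1/n)$, as required.
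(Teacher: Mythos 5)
Your proposal is not just a variant route; it has two substantive gaps, and both are in the places you flag as ``the hard step.''

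\textbf{The predecessor structure does not do what you need.} A $y$-fast trie gives $O(\log\log U)$ predecessor queries over an \emph{ordered integer universe}. Karp--Rabin fingerprints of pattern suffixes carry no order that is useful for string matching: the predecessor (by numerical value of fingerprint) of the stream's suffix signature bears no relation to the longest-match or exact-match structure you are after. What is actually needed is a structure that, given the stream suffix, returns the longest (reversed) dictionary string that is a prefix of the (reversed) stream suffix --- i.e.\ a trie on strings with fast descent. That is exactly what the $z$-fast trie in the paper provides: it is a compacted trie of the reversed heads where signatures are used only to \emph{verify} navigation, not to order keys. Replacing it with a $y$-fast trie on signatures silently changes the query semantics to something that does not solve the problem. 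There is also a secondary issue: you propose keying a hash table by pattern length, but there may be $\Theta(k)$ distinct lengths, and the predecessor structure you build is on signatures, not lengths, so it is never made clear how a single lookup isolates the one relevant length within a length class.

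\textbf{The amortisation claim is false as stated.} You write that ``only $O(1)$ power-of-two suffix fingerprints can change in a useful way per arrival.'' In fact \emph{every} power-of-two suffix fingerprint of the stream changes on \emph{every} arrival, since appending a character shifts all suffix windows by one. Nothing in the proposal limits the number of length classes for which the new arrival could complete a match, so the naive bound of $O(\log m)$ class queries per arrival is not reduced by any argument you give. The paper avoids this head-on by a different decomposition: each pattern is split into a \emph{head} and a \emph{tail} where the tail length lies in $(\log m, 2\log m]$, the (expensive, $O(\log m)$-time) trie query for heads is issued only once every $\log m$ arrivals and de-amortised across the next $\log m$ arrivals, and the per-arrival work in the window $(i+\log m,\, i+2\log m]$ is a single coloured-ancestor query keyed by the tail fingerprint, costing $O(\log\log(k+m))$. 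Nothing analogous to this ``query rarely, then patch with tails'' mechanism appears in your proposal, and it is precisely what makes the per-arrival bound go through. (The very short patterns, of length at most $2\log m$, are handled separately by an Aho--Corasick automaton in the paper; your plan does not single these out, but that part is not the real obstacle.)

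In short, the fingerprint/hashing scaffolding you set up is fine and matches the paper's spirit, but the two load-bearing claims --- that a $y$-fast trie on signatures gives the needed lookup, and that only $O(1)$ classes need work per arrival --- are respectively wrong in kind and unsupported, and the paper's head/tail split with periodic $z$-fast-trie queries and coloured-ancestor queries is the missing idea.
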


For very short patterns shorter than $2 \log m$ we can
straightforwardly construct an Aho-Corasick
automaton~\cite{Aho:1975}. To make this efficient we store a static
perfect hash table at each node to navigate the automaton. The
automaton occupies at most $O(k \log m)$ space and reports occurrences
of short patterns in constant time per arrival. From now on, we can
assume that all patterns are longer than $2 \log m$.

Our solution splits each of the patterns, which are all now guaranteed
to have length greater than $2\log{m}$, into two parts in multiple
ways. The first part of each splitting of the pattern we call the head
and the rest we call the tail.  Tails will always have length $\ell$
for all $\ell$ s.t. $\log m < \ell \le 2 \log m$.  We will therefore
split each pattern into at most $\log{m}$ head/tail pairs, making a
total of at most $k\log{m}$ heads overall.

The overall idea is to insert all heads into a data structure so that
we can find potential matches in the stream efficiently. We will only
look for potential matches every $\log{m}$ arrivals. We use the
remaining at least $\log{m}$ arrivals before a full match can occur
both to de-amortise the cost of finding head matches as well as to
check if the relevant tails match as well.

In order to look for matches with heads of patterns efficiently we
will use a slight modification of the probabilistic $z$-fast trie
introduced by Djamal Bellazougui et al.~\cite{DBPV:2009} (Theorem
4.1~\cite{DBPV:2009}). A $z$-fast trie is a randomised data structure
which compactly represents a trie on a set of strings. Our
modification to the probabilistic $z$-fast trie simply uses a
different signature function. For a string $s = s_1\ldots s_k$ we
define it to be $\phi(s_k\ldots s_1)$, the fingerprint of the reverse
of $s$. Otherwise the data structure remains unchanged.

An important concept in this data structure is the \emph{exit node} of
a string $x$. This is the deepest node labelled by a prefix of
$x$. Given a string $x$ and signatures of all its prefixes, we can
find the exit node of $x$ using the $z$-fast trie in $O(\log m + \log
\log (k+m))$ time, where $m$ is the maximal length of the
strings. Importantly, the lookup procedure compares at most $\log m +
\log \log (k+m)$ pairs of signatures, and hence the probability of a
false match is at most $\frac{\log m + \log \log (k+m)}{n^2} <
\frac{1}{n}$. When there are no false positives in signatures
comparison, correctness and the time bound are guaranteed by Lemma~4.2
and Lemma~4.3 of~\cite{DBPV:2009}.

We can now describe Algorithm $\mathcal{A}_1$ assuming that all
patterns are longer than $2 \log m$ but no longer than $2k\log{m}$. As
a preprocessing step, we build the probabilistic $z$-fast trie for the
reverse of the at most $k \log m$ heads. For regularly spaced indices
of the text, we will use the $z$-fast trie to find the longest head
that matches at each of these locations.

We will also augment the $z$-fast trie in the following way. We mark
each node labelled by a head with a colour representing the
fingerprint of the corresponding tail. In the end, each node may be
marked by several colours, and the total number of colours will be $k
\log m$. On top of the $z$-fast trie we build coloured-ancestor data
structure~\cite{MM:1996}. This occupies $O(k \log m)$ space and
supports $\mathsf{Find} (u, c)$ queries in $O (\log \log \bigl (k \log
m \bigr)) = O(\log \log (k+m))$ time, where $\mathsf{Find} (u, c)$ is
the lowest ancestor of a node $u$ marked with colour $c$. Each pattern
consists of one head concatenated to its corresponding tail and so we
will use coloured-ancestor queries to find the longest whole pattern
matches by using the fingerprints of different tails as queries.

At all times we maintain a circular buffer of size $2k \log m$ which
holds the fingerprints of the at most recent $2k \log m$ prefixes of
the text. Let $i$ be an integer multiple of $\log m$. For each such
$i$, we query the $z$-fast trie with a string $x = t_{i} \ldots t_{i -
  2k \log m + 1}$. Note that for each prefix of $x$ we can compute its
signature in $O(1)$ time with the help of the buffer. The query
returns the exit node $e(x)$ of $x$ in $O(\log m + \log \log (k+m))$
time, which is used to analyse arrivals in the interval $[i + \log m,
  i + 2\log m]$. This exit node corresponds to the longest head that
matches ending at index $i$. The $O(\log m)$ cost of performing the
query is de-amortised during the interval $(i, i + \log m]$.

For each arrival $t_\ell$, $\ell \in (i + \log m, i + 2\log m]$ we
  compute the fingerprint $\phi$ of $t_{i +1} \ldots t_\ell$. This can
  be done in constant time as we store the last $2k \log m \ge m >
  2\log m$ fingerprints. If $\mathsf{Find} (e (x), \phi)$ is defined,
  $\ell$ is an endpoint of a whole pattern match and we report
  it. Otherwise, we proceed to the next arrival. The overall time per
  arrival is therefore dominated by the time to perform the
  coloured-ancestor queries which is $O(\log \log (k+m))$ .

We remark that the algorithm can be applied also to patterns of
maximal length $4k \log m$ and the time complexity will be
unchanged. Moreover, if there are several possible patterns that match
for a given arrival, the algorithm reports the longest such
pattern. These two properties will be needed when we describe
Algorithm $\mathcal{A}_{2b}$ in Section \ref{sec:longper}.

\section{Long patterns}\label{sec:longpatterns}
We now assume that all the patterns have length greater than
$2k\log{m}$. We distinguish two cases according to the periodicity of
those patterns: those with short period and those with long period.
Hereafter, to distinguish the cases, we use the following
notation. Let $m_i = |P_i|$ and $Q_i$ be the prefix of $P_i$ such that
$|Q_i|=m_i-k\log{m}$.  Let $\rho_{Q_i}$ be the period of $Q_i$. The
remaining patterns are then partitioned in two disjoint groups of
patterns, those with $\rho_{Q_i} < k\log{m}$ and those with
$\rho_{Q_i} \geq k\log{m}$. We describe two algorithms:
$\mathcal{A}_{2a}$ and $\mathcal{A}_{2b}$, one for each case
respectively.  Finally, the overall solution is then to run all three
algorithms $\mathcal{A}_{1}$, $\mathcal{A}_{2a}$, $\mathcal{A}_{2b}$
simultaneously to obtain Theorem~\ref{thm:main}.

	\subsection{Algorithm $\mathcal{A}_{2a}$: Patterns with short periods}\label{sec:shortper}
	This section gives an algorithm for a dictionary of patterns $\mathcal
P=P_1,\dots,P_k$ such that $m_i \geq 2k \log m$ and $\rho_{Q_i} <
k \log m$. Recall that $Q_i$ is the prefix of $P_i$ of length $m_i -
k \log m$ and $\rho_{Q_i}$ is the period of $Q_i$. The overall idea
for this case is that if we can find enough repeated occurrences of
the period of a pattern then we know we have almost found a full
pattern match. As the pattern may end with a partial copy of its
period we will have to handle this part separately. The main technical
hurdle we overcome is how to process different patterns with different
length periods in an efficient manner.

We define the tail of a pattern $P_i$ to be its suffix of length
$2k \log m$. Observe that a $P_i$ match occurs if and only if there is
a match of $Q_i$ followed by a match with the tail of $P_i$.

Let $K_i$ be the prefix of $Q_i$ of length $k \log m$. Further observe
that $Q_i$ can only match if there is a sequence of
$\left\lfloor \frac{|Q_i|-|K_i|}{\rho_{Q_i}}+1 \right\rfloor$
occurrences of $K_i$ in the text, each occurring exactly $\rho_{Q_i}$
characters after the last. This follows immediately from the fact that
$K_i$ has length $k \log m$ and $Q_i$ has period $\rho_{Q_i} < k \log
m$.

We now describe algorithm \AlgLS which solves this case. At all times
we maintain a circular buffer of size $2k \log m$ which holds the
fingerprints of the most recent $2k \log m$ prefixes of the text. That
is, if the last arrival is $t_{\ell}$, then the buffer contains the
fingerprints $\phi(t_1 \dots t_{\ell - 2k\log m + 1})$, $\dots$,
$\phi(t_1 \dots t_{\ell})$.

To find $K_i$ matches, we store the fingerprint $\phi(K_i)$ of each
distinct $K_i$ in a static perfect hash table. By looking up
$\phi(t_{\ell-k\log m+1} \dots t_{\ell})$ we can find whether some
$K_i$ matches in $O(1)$ time. For each distinct $K_i$ we maintain a
list of recent matches stored as an arithmetic progression. Each time
we find a new match with $K_i$ we check whether it is exactly
$\rho_{Q_i}$ characters from the last match. If so we include it in
the current arithmetic progression. If not, then we delete the current
progression and start a new progression containing only the latest
match. Note that $K_i=K_j$ implies that $\rho_{Q_i} = \rho_{Q_j}$ and
therefore there is no ambiguity in the description.

We store the fingerprint of each tail in another static perfect hash
table. For each arrival $t_\ell$ we use this hash table to check
whether $\phi(t_{\ell-2k\log m+1} \dots t_{\ell})$ matches the
fingerprint of some tail. This takes $O(1)$ time per arrival.

Assume that the tail of some $P_i$ matched. We will justify below that
we can assume that each tail corresponds to a unique $P_i$. It remains
to decide whether this is in-fact a full match with $P_i$. This is
determined by a simple check, that is whether the current arithmetic
progression for $K_i$ contains at least
$\left\lfloor \frac{|Q_i|-|K_i|}{\rho_{Q_i}}+1 \right\rfloor$
occurrences.
\begin{lemma}
\label{lem:A2a}
Algorithm $\mathcal{A}_{2a}$ takes $O(1)$ time per character and uses
$O(k \log m)$ space.
\end{lemma}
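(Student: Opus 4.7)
The plan is to establish the two bounds by auditing every data structure \AlgLS maintains and verifying each constant-time claim made in its per-arrival description. Correctness of the match-reporting rule is already spelled out in the text (a $P_i$ match is equivalent to a $K_i$-arithmetic progression of the required length followed by a tail match), so I only need to certify the resource bounds.

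For space, I would list the structures one by one. The circular buffer stores $2k\log m$ fingerprints, each a $\Theta(\log n)$-bit word, for a total of $O(k\log m)$ words. The static perfect hash table keyed by $\phi(K_i)$ contains at most $k$ entries and therefore occupies $O(k)$ words; similarly for the table keyed by tail fingerprints. For every distinct $K_i$, the ``current arithmetic progression'' can be encoded by a constant number of integers (first term, common difference $\rho_{Q_i}$, last term, length), so the progressions collectively require $O(k)$ words. Summing gives $O(k\log m)$ as required. (The assumption that each tail corresponds to a unique $P_i$, flagged by the authors for later justification, does not change these totals: at worst we attach a pointer from each of the at most $k$ tail-table entries to a $K_i$-table entry.)

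For time per arrival $t_\ell$, the key observation is that Lemma~\ref{lem:addsubphi} lets us turn a stored prefix fingerprint into the fingerprint of any window whose endpoints are stored, in $O(1)$ arithmetic operations. Thus in constant time I can: update $\phi(t_1\dots t_\ell)$ from its predecessor and push it into the buffer; recover $\phi(t_{\ell-k\log m+1}\dots t_\ell)$ and look it up in the $K_i$ hash table; if a $K_i$ match is found, compare $\ell$ to the last recorded match for this $K_i$ and either extend the arithmetic progression (increment its length) or replace it by a singleton; then recover $\phi(t_{\ell-2k\log m+1}\dots t_\ell)$ and look it up in the tail hash table; and finally, on a tail hit for $P_i$, compare the length of the current $K_i$-progression against the fixed threshold $\lfloor(|Q_i|-|K_i|)/\rho_{Q_i}\rfloor+1$. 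Each step is $O(1)$, so the per-arrival cost is $O(1)$.

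The only point I would expect to require care is the arithmetic-progression update. It must genuinely run in $O(1)$: I would store for each distinct $K_i$ just the index of its most recent match together with the current run length, so that deciding whether the new match continues the progression reduces to a single subtraction and comparison, and extending or resetting the progression costs $O(1)$. Everything else, including the static perfect hash tables (whose $O(k)$ space and $O(1)$ query time are standard) and the buffer bookkeeping, is routine once the fingerprint arithmetic of Lemma~\ref{lem:addsubphi} is in hand.
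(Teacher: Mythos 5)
Your resource analysis is correct and follows essentially the same accounting the paper uses: the circular buffer dominates the space at $O(k\log m)$ words, the two static perfect hash tables and the per-$K_i$ progressions are $O(k)$, and constant-time updates follow from the fingerprint arithmetic of Lemma~\ref{lem:addsubphi} plus $O(1)$ perfect-hash lookups. You in fact give a tighter count than the paper does (the paper loosely writes $O(k\log m)$ fingerprints per table where $O(k)$ suffices), and your step-by-step breakdown of the per-arrival work is more explicit than the paper's one-line appeal to static perfect hashing.

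One thing to flag: you set aside the claim that each tail corresponds to a unique $P_i$ as something ``flagged by the authors for later justification,'' but the paper's ``we will justify below'' actually points to the proof of \emph{this} lemma. The paper's proof discharges that promise here (delete during preprocessing any pattern that contains another as a suffix; then distinct tails follow because each tail contains a full period of its $P_i$) and also sketches why the reported matches are correct (enough $K_i$-repeats imply a $Q_i$ match, and the tail of length $2k\log m$ covers both the residual prefix of $K_i$ and the last $k\log m$ characters of $P_i$). Strictly for the lemma statement your proof is complete, since the lemma only asserts time and space bounds; but if you intend your write-up to stand in for the paper's proof verbatim, the tail-uniqueness argument should be included somewhere, or the forward reference in the algorithm description will dangle.
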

\begin{proof}

The algorithm stores two hash tables, each containing $O(k\log m)$
fingerprints as well as $O(k)$ arithmetic progressions. The total
space is therefore $O(k \log m)$ as claimed. The time complexity of
$O(1)$ per character follows by the use of static perfect hash tables
(which are precomputed and depend only on~$\mathcal P$).

We first prove the claim that each tail corresponds to a unique
$P_i$. To this end, we assume in this section that no pattern contains
another pattern as a suffix. In particular, any such pattern can be
deleted from the dictionary during the preprocessing stage as it does
not change the output. This implies the claim that each $P_i$ has a
distinct tail because the tail contains a full period of $P_i$.

The correctness follows almost immediately from the algorithm
description via the observation that each $Q_i$ is formed from
$\left\lfloor \frac{|Q_i|-|K_i|}{\rho_{Q_i}}+1 \right\rfloor$ repeats
of $K_i$ followed by a prefix of $K_i$. We check explicitly whether
there are sufficient repeats of $K_i$ in the text stream to imply a
$Q_i$ match. While we do not check explicitly that either final prefix
of $K_i$ is a match or that the full $P_i$ matches, this is implied by
the tail match. This is because the tail has length $2k \log m$ and
hence includes the final prefix of $K_i$ and the last $k \log m$
characters of $P_i$ (those in $P_i$ but not in~$Q_i$).
\qed
\end{proof}

	\subsection{Algorithm $\mathcal{A}_{2b}$: Patterns with long periods}\label{sec:longper}
	
Consider a dictionary $\mathcal P$ in which the patterns are such that
$m_i \geq 2k \log m$ and $\rho_{Q_i} \geq k \log m$. Let us define $k$
to be number strings in this dictionary. We can now describe
Algorithm \AlgLL.  Recall that $Q_i$ is the prefix of $P_i$
s.t. $|Q_i|=m_i-k\log m$. For each pattern $P_i$, we define $P_{i,j}$
to be the prefix of $P_i$ with length $2^j$, $1 \le 2^j \le
m_i-2k \log m$.

We will first give an overview of an algorithm that identifies
$P_{i,j}$ matches in $O(\log m)$ time per arrival. With the help
of \AlgS and \AlgLS we will speed it up to achieve an algorithm with
$O(\log \log (k+m))$ time per arrival. The algorithm will identify the
matches with a small delay up to $k \log m$ arrivals. We then show how
to extend $P_{i,j}$ to $Q_i$ matches. This stage will still report the
matches after they occur. Finally we show how to find whole pattern
matches in the stream using the $Q_i$ matches while also completely
eliminating the delay in the reporting of these machines.  In other
words, any matches for whole patterns will be reported as soon as they
occur and before the next arrival in the stream as desired.

\subsubsection{$O(\log m)$-time algorithm.}
We define a logarithmic number of \emph{levels}. Level $j$ will
represent all the matches for prefixes $P_{i,j}$. We store
only \textit{active} prefix matches, that still have the potential to
indicate the start of full matches of a pattern in the
dictionary. This means that any match at level $j$ whose position is
more than $2^{j+1}$ from the current position of an arrival is simply
removed. We will use the following well-known fact.

\begin{fact}[Lemma 3.2\cite{BG:2014}]\label{lem:progressions}
If there are at least three matches of a string $U$ of length $2^{j}$
in a string $V$ of length $2^{j+1}$, then positions of all matches of
$U$ in $V$ form an \textit{arithmetic progression}. The difference of
the progression is equal to the length of the period of $U$.
\end{fact}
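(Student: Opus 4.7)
The plan is to combine the fact that overlapping copies of $U$ in $V$ force periods of $U$ with the Fine and Wilf periodicity lemma. First, observe that if $p_1 < p_2$ are two match positions of $U$ in $V$, then because $|V| = 2|U|$ we have $p_2 - p_1 \leq |V| - |U| = |U|$, so the copies of $U$ at $p_1$ and $p_2$ overlap and comparing them in the overlap shows that $p_2 - p_1$ is a period of $U$. In particular, for any three matches $p_i < p_j < p_k$ the three quantities $p_j - p_i$, $p_k - p_j$, $p_k - p_i$ are all periods of $U$.

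Next I would show that the consecutive gaps are all equal. Take three consecutive matches $p_1 < p_2 < p_3$ with gaps $d_1, d_2$. Since there are at least three matches in $V$, we have $d_1 + d_2 = p_3 - p_1 \leq |U|$, so Fine and Wilf applied to the periods $d_1, d_2$ of $U$ yields that $g := \gcd(d_1, d_2)$ is also a period of $U$. If $g < d_1$, I would then argue that $p_1 + g$ is itself a match of $U$ in $V$, producing a match strictly between the consecutive pair $(p_1, p_2)$ and contradicting consecutivity. The argument verifies $V[p_1 + g + i] = U[i+1]$ for every $i \in \{0, \ldots, |U|-1\}$: for $i \in [0, |U|-1-g]$ this follows from the match at $p_1$ combined with period $g$ of $U$, and for $i \in [d_1 - g, |U|-1]$ it follows from the match at $p_2$ combined with period $d_1 - g$ of $U$ (which is a multiple of $g$, hence a period). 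Since $d_1 \leq |U|$ these two intervals cover all of $\{0, \ldots, |U|-1\}$. The case $g = d_1 < d_2$ is symmetric: one exhibits a match at $p_2 + d_1$ strictly between $p_2$ and $p_3$, again a contradiction.

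Once all consecutive gaps equal a common value $d$, I would prove $d$ equals the period of $U$, call it $\rho$. Because there are at least three matches, $2d \leq |U|$, so in particular $\rho \leq d \leq |U|/2$. Fine and Wilf applied to $\rho$ and $d$ (which sum to at most $|U|$) then shows that $\rho$ divides $d$. If $\rho < d$, the same patching argument as above exhibits a match at $p_1 + \rho$, strictly between $p_1$ and $p_2$, contradicting consecutivity. Hence $d = \rho$ and the match positions form an arithmetic progression with common difference $\rho$.

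The main obstacle is the case analysis that produces the intermediate match: one has to splice together two ranges of character identities from two different overlapping copies of $U$, using two different periods ($g$ and $d_1 - g$ in the first step, or $\rho$ and $d - \rho$ in the second). This splicing works precisely because Fine and Wilf guarantees the smaller candidate is a true period of $U$, so that all the offsets arising in the other range are non-negative multiples of that period. Keeping the index arithmetic straight is the only real pitfall, but the bound $d_1 + d_2 \leq |U|$ is exactly what makes every Fine-Wilf application immediate and every interval union cover all of $\{0, \ldots, |U|-1\}$.
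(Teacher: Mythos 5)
The paper does not prove this statement; it quotes it verbatim as Lemma 3.2 of Breslauer and Galil~\cite{BG:2014}, so there is no in-paper argument to compare against. Your proof is correct and self-contained. Every pairwise offset between occurrences is at most $|V|-|U|=|U|$ and hence a period of $U$; for a consecutive triple, the bound $p_3 - p_1 \le |U|$ lets Fine and Wilf produce $\gcd(d_1,d_2)$ as a period, and your splicing argument then manufactures an occurrence strictly between a consecutive pair whenever that gcd is a proper divisor of one of the gaps, a contradiction; the same mechanism with $\rho + d \le 2d \le |U|$ identifies the common gap with the shortest period $\rho$. The index arithmetic in the splicing step is right, and the two coverage intervals do union to all of $\{0,\ldots,|U|-1\}$ precisely because the relevant gap is at most $|U|$. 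Two details you leave implicit but which do hold: equal gaps across every consecutive triple propagate by overlap to a single common difference, and the manufactured intermediate occurrence is contained in $V$ since it starts strictly left of an existing occurrence. Neither is a gap; the argument is sound.
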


It follows that if there are at least three active matches for the
same prefix at the same level, we can compactly store them as
a \textit{progression} in constant space. Consider a set of distinct
prefixes of length $2^{j}$ of the patterns in $\mathcal P$. For each
of them we store a progression that contains:

\begin{enumerate}[{(1)}]
	\item The position $\texttt{fp}$ of the first match;
	\item The fingerprint of $t_1 \ldots t_{\texttt{fp}}$;
	\item The fingerprint of the period $\rho$ of the prefix;
	\item The length of the period $\rho$ of the prefix;
	\item The position $\texttt{lp}$ of the last match.
\end{enumerate}

With this information, we can deduce the position and the fingerprint
of the text from the start to the position of any active match of the
prefix. Moreover, we can add a new match or delete the first match in
a progression in $O(1)$ time.

We make use of a perfect hash table $\mathcal{H}$ that stores the
fingerprints of all the prefixes of the patterns in $\mathcal P$. The
keys of $\mathcal{H}$ correspond to the fingerprints of all the
prefixes and the associated value indicates whether the prefix from
which the key was obtained is a proper prefix of some pattern, a whole
pattern itself, or both. Using the construction of~\cite{Ruzic:08},
for example, the total space needed to store all the fingerprints and
their corresponding values is $O(k\log{m})$.

When a character $t_\ell$ of the text arrives, we update the current
position and the fingerprint of the current text. The algorithm then
proceeds by the progressions over $\log m$ levels.  We start at level
$0$. If the fingerprint $\phi(t_\ell)$ is in $\mathcal{H}$, we insert
a new match to the corresponding progression at level $0$.

For each level $j$ from $0$ to $\log m$, we retrieve the position $p$
of the first match at level $j$. If $p$ is at distance $2^{j+1}$ from
$t_\ell$, we delete the match and check if the fingerprint
$\phi(t_p \ldots t_\ell)$ is in $\mathcal{H}$. If it is and the
fingerprint is a fingerprint of one of the patterns, we report a match
(ending at $t_\ell$, the current position of the text). If the
fingerprint is in $\mathcal{H}$ and if it is a fingerprint of a proper
prefix, then $p$ is a plausible position of a match of a prefix of
length $2^{j+1}$. We check if it fits in the appropriate progression
$\pi$ at level $j+1$. (Which might not be true if the fingerprints
collided). If it does, we insert $p$ to $\pi$. If $p$ does not match
in $\pi$, we discard it and proceed to the next level.

As updating progressions at each level takes $O(1)$ time only, and
there are $\log m$ levels, the time complexity of the algorithm is
$O(\log m)$ per arrival. The space complexity is $O(k \log m)$.  We
deliberately omit some details (for example, how to retrieve the
position of the first match in the level) as they will not be
important for the final algorithm.

\subsubsection{$O(\log\log (k+m))$-time algorithm.}
We will follow the same level-based idea. To speed up the algorithm,
we will consider prefixes $P_{i,j}$ with short and long periods
separately. The number of matches of the prefixes with short periods
can be big, but we will be able to compute them fast with the help
of \AlgS and \AlgLS. On the other hand, matches of the prefixes with
long periods are rare, and we will be able to compute them in a round
robin fashion.

Let $\rho_{i,j}$ be the period of $P_{i,j}$. We first build a
dictionary $D_1$ containing at most one prefix for each
$P_{i}$. Specifically, containing the largest $P_{i,j}$ with the
period $\rho_{i,j}<k \log m$ and $2k \log m \leq |P_{i,j}| \le
m_i-2k \log m$. If no such $P_{i,j}$ exists we do not insert a prefix
for $P_i$. This dictionary is processed using a modification of
algorithm \AlgLS which we described in Section~\ref{sec:shortper}. The
modification is that when a text character $t_\ell$ arrives, the
output of the algorithm identifies the longest pattern in $D_1$ which
matches ending at $t_\ell$ or `no match' if no pattern matches. This
is in contrast to \AlgLS as described previously where we only
outputted whether some pattern matches. The modification takes
advantage of the fact that prefixes in $D_1$ all have power-of-two
lengths and uses a simple binary search approach over the $O(\log m)$
distinct pattern lengths. This increases the run-time of \AlgLS to
$O(\log \log m)$ time per arrival. The details can be found in
Appendix~\ref{sec:app}.

Whenever a match is found with some pattern in $D_1$, we update the
match progression of the reported pattern (but not of any of its
suffixes that might be in $D_1$). Importantly, we will still have at
most two progressions of active matches per prefix because of the
following lemma and corollary.

\begin{lemma}\label{lem:equalperiods}
Let $P_{i,j}, P_{i',j'}$ be two prefixes in $D_1$ and suppose that
$P_{i,j}$ is a suffix of $P_{i',j'}$. The periods of $P_{i,j},
P_{i',j'}$ are equal.
\end{lemma}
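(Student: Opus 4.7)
The plan is to set $\rho = \rho_{i,j}$ and $\rho' = \rho_{i',j'}$ and prove each inequality $\rho \le \rho'$ and $\rho' \le \rho$ separately, by showing that each value is a period of the \emph{other} pattern prefix and invoking minimality. Throughout I will use the two quantitative facts guaranteed by the definition of $D_1$: every element of $D_1$ has length at least $2k\log m$, while its period is strictly less than $k\log m$. In particular $\rho + \rho' < 2k\log m$.

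The direction $\rho \le \rho'$ is immediate. Since $P_{i,j}$ is a substring of $P_{i',j'}$ of length $|P_{i,j}| \ge 2k\log m > \rho'$, the period $\rho'$ of $P_{i',j'}$ is automatically inherited by $P_{i,j}$, and so by minimality of $\rho_{i,j}$ we obtain $\rho \le \rho'$.

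The converse $\rho' \le \rho$ requires showing that $\rho$ is a period of the whole of $P_{i',j'}$, not just of its suffix $P_{i,j}$. Write $s = |P_{i',j'}| - |P_{i,j}|$ and fix an arbitrary position $a$ with $1 \le a \le |P_{i',j'}| - \rho$. I would choose the smallest $k \ge 0$ such that $a + k\rho' \ge s+1$; by construction $a + k\rho' \le s + \rho'$, and since $\rho + \rho' < 2k\log m \le |P_{i,j}|$ both $a + k\rho'$ and $a + \rho + k\rho'$ lie inside the suffix region $[s+1, |P_{i',j'}|]$. Iterating the period $\rho'$ of $P_{i',j'}$ gives $P_{i',j'}[a] = P_{i',j'}[a + k\rho']$ and $P_{i',j'}[a + \rho] = P_{i',j'}[a + \rho + k\rho']$, while the period $\rho$ of $P_{i,j}$ equates these two right-hand sides. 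Hence $P_{i',j'}[a] = P_{i',j'}[a + \rho]$, so $\rho$ is a period of $P_{i',j'}$, giving $\rho' \le \rho$ and therefore $\rho = \rho'$. The main obstacle is precisely this shifting argument: one needs the window $[a + k\rho', a + \rho + k\rho']$ of width $\rho$ to sit entirely inside the suffix $P_{i,j}$, and the length gap $|P_{i,j}| \ge 2k\log m$ versus $\rho + \rho' < 2k\log m$ is exactly what makes this possible.
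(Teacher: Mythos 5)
Your proof is correct, but it takes a genuinely different route from the paper's. The paper argues by contradiction: it observes that $P_{i,j}$ inherits the period $\rho_{i',j'}$ as a suffix of $P_{i',j'}$, then invokes the Fine--Wilf periodicity lemma on the two periods $\rho_{i,j}$ and $\rho_{i',j'}$ of $P_{i,j}$ (which applies because $\rho_{i,j} + \rho_{i',j'} < 2k\log m \le |P_{i,j}|$), concluding divisibility and a contradiction with minimality. You instead prove both inequalities directly: the easy direction $\rho_{i,j} \le \rho_{i',j'}$ is the same inheritance observation, but for $\rho_{i',j'} \le \rho_{i,j}$ you give an explicit index-shifting argument (translate an arbitrary position into the suffix region via $\rho'$-steps, apply $\rho$ there, translate back) to show that $\rho_{i,j}$ is in fact a period of the whole string $P_{i',j'}$. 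Your approach is more elementary -- it does not need the periodicity lemma at all -- and it is also more explicit about the nontrivial step: propagating a period known only on a suffix to the full string. The paper's argument is more compact but leans on Fine--Wilf as a black box and, as written, compresses (or slightly garbles) exactly the propagation step that you spell out; your version makes the length-versus-period slack $|P_{i,j}| \ge 2k\log m > \rho + \rho'$ do visible work. One minor presentational quibble: your choice of $k$ for the shift clashes notationally with the $k$ in $k\log m$ (the number of patterns), so a different letter would be safer, but the mathematics is sound.
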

\begin{proof}
Assume the contrary. Then $P_{i,j}$ has two periods: $\rho_{i,j}$ and
$\rho_{i',j'}$ (because it is a suffix of $P_{i',j'}$). We have
$\rho_{i,j} + \rho_{i',j'} < 2k \log m \le |P_{i,j}|$. By the
periodicity lemma (see, e.g.,~\cite{Lothaire:2002}), $\rho_{i,j}$ is a
multiple of $\rho_{i',j'}$. But then $P_{i,j}$ is periodic with period
$\rho_{i',j'} < \rho_{i,j}$, a contradiction.
\qed
\end{proof}

\begin{corollary}\label{cor:equalperiods}
Let $P_{i,j}$, $P_{i',j'}$, and $P_{i'',j''}$ be prefixes in
$D_1$. Suppose that $P_{i,j}$ is a suffix of $P_{i',j'}$ and
simultaneously is a suffix of $P_{i'',j''}$. Then $P_{i',j'}$ is a
suffix of $P_{i'',j''}$ (or vice versa).
\end{corollary}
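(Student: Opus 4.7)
The plan is to apply Lemma~\ref{lem:equalperiods} pairwise in order to equip all three prefixes with the same period, and then use the fact that a string with period $\rho$ is uniquely determined by any $\rho$ consecutive of its characters. I will first observe that by Lemma~\ref{lem:equalperiods} applied to $(P_{i,j},P_{i',j'})$ and to $(P_{i,j},P_{i'',j''})$, all three prefixes share a common period, call it $\rho$, and moreover $\rho < k\log m \le |P_{i,j}|$ since each prefix in $D_1$ has length at least $2k\log m$.

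Without loss of generality assume $|P_{i',j'}| \le |P_{i'',j''}|$. Let $X$ be the suffix of $P_{i'',j''}$ of length $|P_{i',j'}|$. Since $P_{i'',j''}$ has period $\rho$ and $|X|\ge |P_{i,j}|\ge 2k\log m>\rho$, the string $X$ itself has period $\rho$. Furthermore, $P_{i,j}$ is a suffix of $X$ (because $P_{i,j}$ is a suffix of $P_{i'',j''}$ and $|P_{i,j}|\le |P_{i',j'}|=|X|$), and $P_{i,j}$ is also a suffix of $P_{i',j'}$ by hypothesis. Hence $X$ and $P_{i',j'}$ are two strings of the same length, both having period $\rho$, whose last $|P_{i,j}|\ge \rho$ characters coincide.

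The concluding step is the standard observation that a string with period $\rho$ is determined by any window of $\rho$ consecutive characters. Reading backwards from the shared suffix $P_{i,j}$ and using the periodicity of both $X$ and $P_{i',j'}$ with common period $\rho$, every earlier character is forced to agree as well, so $X = P_{i',j'}$. This shows $P_{i',j'}$ is a suffix of $P_{i'',j''}$, as required.

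The only potentially subtle point, which I would expect to be the main obstacle if one were not careful, is to make sure that the length hypotheses needed to invoke Lemma~\ref{lem:equalperiods} and to inherit the period by $X$ actually hold; but all of them follow from the defining condition $|P_{i,j}|\ge 2k\log m$ for prefixes in $D_1$ together with $\rho<k\log m$, so there is no real difficulty beyond carefully bookkeeping these inequalities.
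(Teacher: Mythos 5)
Your proof is correct and is the natural argument the paper leaves implicit (the corollary is stated without proof, immediately after Lemma~\ref{lem:equalperiods}). You correctly combine the lemma with the standard fact that a string with period $\rho$ is determined by any $\rho$ consecutive characters, and you verify the length bookkeeping ($|P_{i,j}|\ge 2k\log m>\rho$) needed to make the period propagate; this is exactly what is needed and matches the intended reasoning.
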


We now consider any $P_i$ for which we did not find a suitable small
period prefix. In this case it is guaranteed that there is a prefix
$P_{i,j}$ with the period longer than $k \log m$ but length at most
$4k \log m$. We build another dictionary $D_2$ for each of these
prefixes. We apply algorithm \AlgS and for each arrival $t_\ell$
return the longest prefix $P_{i,j}$ in $D_2$ that matches at it in
$O(\log \log (k+m))$ time. We then need to update the match
progression of $P_{i,j}$ as well as the match progressions of all
$P_{i',j'} \in D_2$ that are suffixes of $P_{i,j}$. Fortunately, each
of the prefixes in $D_2$ can match at most once in every $k \log m$
arrivals, because the period of each of them is long, meaning that we
can schedule the updates in a round robin fashion to take $O(1)$ time
per arrival.

We denote a set of all $P_{i,j}$ such that $\rho_{i,j} \ge k \log m$
by $S$. Any of these prefixes can have at most one match in $k \log m$
arrivals. Because of that and because $|S| \leq k \log m$, we will be
able to afford to update the matches in a round robin fashion.

We will have two update processes running in parallel. The first
process will be updating matches of prefixes $P_{i,j} \in S$ such that
$P_{i,j-1} \in S \cup D_2$. We consider one of these prefixes per
arrival. If there is a match with $P_{i,j}$ in $[t_\ell-k\log m,
t_{\ell}]$ then there must be a corresponding match with $P_{i,j-1}$
ending in $[t_{\ell-2^{j-1}-k\log m}, t_{\ell-{2^{j-1}}}]$. As
$P_{i,j-1} \in S$, $\rho_{i,j} \ge k\log m$ so there is at most one
match. We can determine whether this match can be extended into a
$P_{i,j}$ match using a single fingerprint comparison as described in
the $O(\log m)$-time algorithm. This is facilitated by storing a
circular buffer of the fingerprints of the most recent $k\log m$ text
prefixes.

The second process will be updating matches of prefixes $P_{i,j} \in
S$ such that $P_{i,j-1} \in D_1$. Again, if there is a match with
$P_{i,j}$ in $[t_\ell-k\log m, t_{\ell}]$ then there must be a
corresponding match with $P_{i,j-1}$ ending in $[t_{\ell-2^{j-1}-k\log
m}, t_{\ell-{2^{j-1}}}]$. However, the second process will be more
complicated for two reasons. First, $P_{i,j-1}$ has a small period so
there could be many $P_{i,j-1}$ matches ending in this
interval. Second, the information about $P_{i,j-1}$ matches can be
stored not only in the progressions corresponding to $P_{i,j-1}$, but
also in the progressions corresponding to prefixes that have
$P_{i,j-1}$ as a suffix. The first difficulty can be overcome because
of the following lemma.

\begin{lemma}\label{lem:changeofperiod}
Consider any $P_{i,j}$ such that $\rho_{i,j-1} < k \log
m \leq \rho_{i,j}$. Given a match progression for $P_{i,j-1}$, only
one match could also correspond to a match with $P_{i,j}$.
\end{lemma}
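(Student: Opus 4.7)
My plan is to pin down the single position at which $P_{i,j}$ breaks the short period $\rho := \rho_{i,j-1}$, and use that one break point to rule out every candidate extension but one. Let $\sigma$ be the longest prefix of $P_{i,j}$ that has period $\rho$. Since $P_{i,j-1}$ has period $\rho$ we get $|\sigma|\ge |P_{i,j-1}|=2^{j-1}$; since the minimum period of $P_{i,j}$ is $\rho_{i,j}>\rho$ we get $|\sigma|<|P_{i,j}|=2^{j}$. By maximality, $P_{i,j}[|\sigma|+1]\ne P_{i,j}[|\sigma|+1-\rho]$, and this single disagreement will drive the contradiction.

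I will then suppose for contradiction that two distinct matches in the $P_{i,j-1}$-progression, at text positions $p$ and $p+c\rho$ with $c\ge 1$, both extend to full $P_{i,j}$-matches. Because the progression is only consulted for matches whose $P_{i,j}$-extensions would end in the window of length $k\log m$ identified in the text, we immediately have $c\rho \le k\log m$. Reading the text character $t_{p+|\sigma|}$ once from each of the two assumed $P_{i,j}$-occurrences yields $t_{p+|\sigma|}=P_{i,j}[|\sigma|+1]$ and $t_{p+|\sigma|}=P_{i,j}[|\sigma|+1-c\rho]$.

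Finally, I will use that the prefix of length $|\sigma|$ actually has period $\rho$ to chain the identity $P_{i,j}[|\sigma|+1-c\rho]=P_{i,j}[|\sigma|+1-(c-1)\rho]=\cdots=P_{i,j}[|\sigma|+1-\rho]$, which combined with the previous paragraph gives $P_{i,j}[|\sigma|+1]=P_{i,j}[|\sigma|+1-\rho]$ and contradicts the break at position $|\sigma|+1$. The one step that I expect to require care — though it is entirely routine — is the bookkeeping that shows every index appearing in the chain lies in $[1,|\sigma|]$ and that position $p+|\sigma|$ really is covered by both assumed $P_{i,j}$-occurrences. Both reduce to the slack $c\rho\le k\log m\le 2k\log m\le 2^{j-1}\le |\sigma|$, which holds because any $P_{i,j-1}\in D_1$ has length at least $2k\log m$ by construction. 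The conceptual content of the proof lies entirely in the single break inequality for $\sigma$.
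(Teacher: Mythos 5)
Your proof is correct and follows essentially the same route as the paper's: the paper writes $P_{i,j}=U^rV$ with $V$ not a prefix of $U$ (so in your notation $|\sigma|=r\rho+d-1$ where $d$ is the first disagreement) and argues from the unique break in the period $\rho$ structure, exactly the idea behind your $\sigma$. The only difference is that the paper's phrasing additionally names the candidate match (the $r$-th last one in the progression), which the algorithm uses to locate it in $O(1)$ time, whereas your argument establishes uniqueness by contradiction without identifying which match survives — this is sufficient for the lemma as stated.
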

\begin{proof}
Let $U$ be the prefix of $P_{i,j-1}$ of length $\rho_{i,j-1}$. That
is, the substrings bounded by consecutive matches in the match
progression for $P_{i,j-1}$ are equal to $U$. Suppose that $P_{i,j}$
starts with exactly $r$ copies of $U$. Then we have $P_{i,j} = U^r V$
for some string $V$. Note that as $\rho_{i,j-1} < k \log
m \leq \rho_{i,j}$, the string $V$ cannot be a prefix of $U$. Then the
only match in the progression which could match with $P_{i,j}$ is the
$r$-th last one.
\qed
\end{proof}

To overcome the second difficulty, we use
Corollary~\ref{cor:equalperiods}. It implies that prefixes in $D_1$
can be organized in chains based on the ``being-a-suffix"
relationship. We consider prefixes in each chain in a round robin
fashion again. We start at the longest prefix, let it be $P_{i,j}$. At
each moment we store exactly one progression initialized to the
progression of $P_{i,j}$. If the progression intersects with
$[t_{\ell-2^{j-1}-k\log m}, t_{\ell-{2^{j-1}}}]$, we identify the
`interesting' match in $O(1)$ time with the help of
Lemma~\ref{lem:changeofperiod} and try to extend it as in the first
process. We then proceed to the second longest prefix $P_{i',j'}$. If
the stored progression intersects with $[t_{\ell-2^{j'-1}-k\log m},
t_{\ell-2^{j'-1}}]$, we proceed as for $P_{i,j}$. Otherwise, we update
the progression to be the progression of $P_{i',j'}$ and repeat the
previous steps for it. We continue this process for all prefixes in
the chain.

From the description of the processes it follows that the matches for
each $P_{i,j}$ (in particular, for the longest $P_{i,j}$ for each $i$)
are outputted in $O(\log \log (k+m))$ time per arrival with a delay of
up to $k \log m$ characters (i.e. at most $k \log m$ characters after
they occur).

\subsubsection{Finding $Q_i$ matches.}
We now show how to find $Q_i$ matches using $P_{i,j}$ matches. If
there is a match with $Q_i$ in $[t_\ell-k\log m, t_{\ell}]$, there
must be a match with the longest $P_{i,j}$ in $[t_\ell-2^j-k\log m,
t_{\ell}-2^j]$. Because $|P_{i,j}| \le m_i - 2k\log m$, this match has
been identified by the algorithm and it is the first match in the
progressions. We can determine whether this match can be extended into
a $Q_i$ match using a single fingerprint comparison.

Therefore the $Q_i$ matches are outputted in $O(\log \log (k+m))$ time
with a delay of up to $k \log m$ characters (i.e. at most $k \log m$
characters after they occur). We can then remove this delay using
coloured ancestor queries in a similar manner to algorithm \AlgS as
described below.

\subsubsection{Finding whole pattern matches and removing the delay.}
Up to this point, we have shown that we can find each $Q_i$ match in
$O(\log \log (k+m))$ time per arrival with a delay of at most $k\log
m$ characters. Further we only report one $Q_i$ match at each time. We
will show how to extend these $Q_i$ matches into $P_i$ matches using
coloured ancestor queries in $O(\log\log (k+m))$ time per arrival.

Build a compacted trie of the reverse of each string $Q_i$. The edges
labels are not stored.  The space used is $O(k)$.  For each $i$ we can
find the reverse of $Q_i$ in the trie in $O(1)$ time (by storing an
$O(k)$ space look-up table).

The tail of each $P_i$ is its $(k\log m)$-length suffix, i.e.\@ the
portion of $P_i$ which is not in $Q_i$. Each distinct tail is
associated with a colour. As there are at most $k \log m$ patterns,
there are at most $k\log m$ colours. Computing the colour from the
tail is achieved using a standard combination of fingerprinting and
static perfect hashing. For each node in the tree which represents
some $Q_i$ we colour the node with the colour of the tail of $P_i$.

Whenever we find a $Q_i$ match, we identify the place in the tree
where the reverse of $Q_i$ occurs. Recall that these matches may be
found after a delay of at most $k \log m$ characters. A $Q_i$ match
ending at position $\ell-k\log m$ implies a possible $P_i$ match at
position $\ell$. We remember this potential match until $t_\ell$
arrives.

More specifically when $t_\ell$ arrives we determine the node $u$ in
the trie representing the reverse of the longest $Q_i$ which has a
match at position $\ell-k\log m$. This can be done in $O(1)$ time by
storing a circular buffer of fingerprints.

We now need to decide whether $Q_i$ implies the existence of some
$P_j$ match.  It is important to observe that as we discarded all but
the longest such $Q_i$, we might find a $P_j$ with $j\neq i$.

For each arrival $t_\ell$, we compute the fingerprint $\phi$ of
$t_{\ell-k\log m+1} \ldots t_\ell$. This can be done in constant time
as we store the last $k\log m$ fingerprints. If $\mathsf{Find}
(u, \phi)$ is defined, $t_\ell$ is an endpoint of a pattern match and
we report it. Otherwise, we proceed to the next arrival.

\begin{lemma}
\label{lm:alg2b}
Algorithm $\mathcal{A}_{2b}$ takes $O(\log \log (k+m))$ time per
character. The space complexity of the algorithm is $O(k \log m)$.
\end{lemma}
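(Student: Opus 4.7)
The plan is to account separately for the time and space contributions of each of the five ingredients identified in the description of \AlgLL{}: (i) the call to the modified \AlgLS on dictionary $D_1$; (ii) the call to \AlgS on dictionary $D_2$; (iii) the two round-robin update processes that maintain the progressions for prefixes $P_{i,j} \in S$; (iv) the extension from $P_{i,j}$-matches to $Q_i$-matches; and (v) the coloured-ancestor lookup that turns a $Q_i$-match into a $P_i$-match and removes the delay.

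For the time bound, I would simply read off the per-arrival cost of each piece. By the (appendix-level) modification of \AlgLS the handling of $D_1$ costs $O(\log\log m)$ per arrival, and \AlgS on $D_2$ costs $O(\log\log(k+m))$ per arrival by Lemma~\ref{lm:alg3}. The first update process inspects a single prefix per arrival and performs $O(1)$ work (one fingerprint comparison against the circular buffer), and the second process walks one chain element per arrival, using Corollary~\ref{cor:equalperiods} to justify that the chain structure is well defined and Lemma~\ref{lem:changeofperiod} to isolate the unique candidate match in $O(1)$ time. The extension to a $Q_i$-match is also $O(1)$: the first entry of the progression is the only candidate, and a single fingerprint comparison settles it. Finally, the coloured-ancestor query used to convert a $Q_i$-match into a $P_i$-match costs $O(\log\log(k\log m)) = O(\log\log(k+m))$, mirroring the analysis given for \AlgS. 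Summing, the per-arrival time is dominated by $O(\log\log(k+m))$.

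For the space bound I would enumerate the stored structures. The dictionaries $D_1$ and $D_2$ each contain at most one prefix per pattern, so the underlying \AlgLS- and \AlgS-instances use $O(k\log m)$ space by Lemmas~\ref{lem:A2a} and~\ref{lm:alg3}. The hash table $\mathcal{H}$ of prefix fingerprints has $O(k\log m)$ keys by~\cite{Ruzic:08}. There are $O(\log m)$ levels, and on each level at most two constant-size progressions are kept for each distinct prefix, of which there are $O(k)$ per level, for $O(k\log m)$ total. The circular buffer of fingerprints stores the last $k\log m$ entries. The compacted trie on the reverses of the $Q_i$, the associated look-up table, and the coloured-ancestor structure on its $O(k)$ nodes marked by the $O(k\log m)$ tail-colours together use $O(k\log m)$ space. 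Adding these contributions gives the claimed $O(k\log m)$ bound.

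The main obstacle, and the only place where a genuine argument rather than bookkeeping is needed, is verifying that the round-robin schedule is fast enough to miss nothing. Concretely, I would argue as follows. Every prefix $P_{i,j}\in S$ has period at least $k\log m$, so by Fact~\ref{lem:progressions} it can match at most once in any window of $k\log m$ arrivals. The first process touches one prefix of $S$ per arrival and $|S|\le k\log m$, so every prefix is revisited within the relevant window. For the second process, Corollary~\ref{cor:equalperiods} groups the prefixes of $D_1$ into disjoint chains under the suffix relation; processing one chain element per arrival revisits every $P_{i,j-1}\in D_1$ inside each $k\log m$-window, and Lemma~\ref{lem:changeofperiod} lets us pick out, in constant time, the unique candidate match that could extend into a $P_{i,j}$-match. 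Hence no match is lost, the delay is bounded by $k\log m$ arrivals, and the coloured-ancestor step (as in \AlgS) is able to report every $P_i$-match exactly at its endpoint. Combined with the time and space accounting above, this yields the lemma. \qed
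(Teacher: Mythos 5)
Your proof is correct and follows essentially the same route as the paper; in fact, the paper states Lemma~\ref{lm:alg2b} without an explicit proof block, since the time and space bounds are established piecemeal in the preceding algorithm description (the $O(\log\log m)$ cost of the modified \AlgLS on $D_1$, the $O(\log\log(k+m))$ cost of \AlgS on $D_2$, the $O(1)$ round-robin updates justified by the period bound on prefixes in $S$ together with Corollary~\ref{cor:equalperiods} and Lemma~\ref{lem:changeofperiod}, the $O(1)$ extension to $Q_i$ matches, and the $O(\log\log(k+m))$ coloured-ancestor query). Your proposal simply collates that implicit analysis into one argument, which is exactly what a proof of the lemma should do.
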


\bibliographystyle{splncs03}
\bibliography{longnames,dictionary}
\newpage
\appendix

\section {Suffixes, powers-of-two and the longest match}\label{sec:app} 

In Section~\ref{sec:longper} we will use algorithm \AlgLS as a black
box. However, we will need the output to determine the longest pattern
that matches when each new text character arrives rather than simply
whether a pattern matches. Furthermore, we will not be able to
guarantee (as is safely assumed above) that no pattern is a prefix of
another. Fortunately the patterns will all have a power-of-two
length. We now briefly describe the required changes which increase
the running time from $O(1)$ to $O(\log \log m)$.

The changes do not affect the algorithm until the point at which some
tail, has been matched. As one pattern could be a suffix of another,
$\Theta(\log m)$ patterns could have the same tail. This follows from
the fact that the tail contains a full period of any pattern $P_i$ and
that all patterns have power-of-two lengths.

Whenever a tail is matched when some $t_\ell$ arrives, we need to
determine the longest matching $P_i$ with this tail. Assume, as a
motivating special case, that every $P_i$ with this tail has the same
$K_i$. As above, $P_i$ is associated with a number of occurrences, \[
c_i = \left\lfloor \frac{|Q_i|-|K_i|}{\rho_{Q_i}}+1 \right\rfloor \]
of $K_i$ that are required for a $P_i$ match. The basic idea is to
perform binary search on the set of $c_i$ values (for $P_i$s with the
matching tail) using the number of occurrences of $K_i$ in the current
arithmetic progression as the key. As there are most $O(\log m)$
candidates, this takes $O(\log \log m)$ time.

However, two patterns $P_i$ and $P_j$ with the same tail could have
$K_i \neq K_j$. Fortunately, Lemma~\ref{lem:wrongK} below says that
using the `wrong' $K_i$ only affects the number of required matches by
at most $1$. For each tail, we (arbitrarily) preselect a single $K_i$
among the $P_i$ with this tail. We then perform the same binary search
using $K_i$. As the $O(\log m)$ candidates have power-of-two length
(greater than $2k\log m$) for any two patterns $P_i \neq P_j$, we have
that $|c_i-c_j| > 4$. Therefore, we find at most one candidate, $P_j$
is checked using its own $K_j$.

\begin{lemma}\label{lem:wrongK}
Let $P_i$ and $P_j$ be two patterns with the same tail but $K_i \neq
K_j$. Let us also assume that the tail of $P_j$ matches when some
$t_\ell$ arrives. $P_i$ matches ending at $t_\ell$ if the current
arithmetic progression for $K_j$ contains at least $c_i + 1$
occurrences. Furthermore $P_i$ \emph{does not} match at $t_\ell$ if
the same progression contains fewer than $c_i-1$ matches.
\end{lemma}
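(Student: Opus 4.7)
My plan is to show that the length of the current $K_j$-AP almost determines whether $P_i$ matches, up to a $\pm 1$ slack. The first step is to argue that $P_i$ and $P_j$ share a common period $\rho := \rho_{Q_i} = \rho_{Q_j}$. Since the tails of $P_i$ and $P_j$ coincide, the first $k\log m$ characters of this common tail form a single length-$k\log m$ string that is simultaneously $\rho_{Q_i}$-periodic (as a suffix of $Q_i$) and $\rho_{Q_j}$-periodic (as a suffix of $Q_j$). With both periods strictly less than $k\log m$, the periodicity lemma applied to this common suffix, together with the minimality of $\rho_{Q_i}$ and $\rho_{Q_j}$ as periods of $Q_i$ and $Q_j$, forces the two periods to coincide.

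Next, I would relate $K_i$ to $K_j$ via a phase shift. Both $K_i$ and $K_j$ are length-$k\log m$ prefixes of $\rho$-periodic strings whose tails agree, so when viewed as windows into the two-sided $\rho$-periodic extension of the tail they start at positions differing by $\delta := (|P_j| - |P_i|) \bmod \rho$; furthermore $\delta \neq 0$ since $K_i \neq K_j$. Consequently, in any $\rho$-periodic region of the text, the starting positions of $K_i$-matches and the starting positions of $K_j$-matches form two arithmetic progressions of common step $\rho$ that are translates of one another by $\delta$.

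Finally I would compare the counts. The matching of $P_j$'s tail at $t_\ell$ pins the last element of the current $K_j$-AP to a specific position relative to $t_\ell$ (determined by $|Q_j| \bmod \rho$), so an AP of length $n$ certifies that the text is $\rho$-periodic over a contiguous interval ending at that pinned position and of length $(n-1)\rho + k\log m$. By the characterisation in Section~\ref{sec:shortper}, $P_i$ matches at $t_\ell$ if and only if $K_i$ admits $c_i$ AP-matches of step $\rho$ inside the window $[t_{\ell - |P_i| + 1},\, t_{\ell - k\log m}]$ (the tail match being already given). Translating $K_i$-matches in this window to $K_j$-matches via the $\delta$-shift, the two counts agree up to a single boundary match that may cross the left edge of the window depending on how $\delta$ compares with $(|Q_i| - k\log m) \bmod \rho$. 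Hence $n \ge c_i + 1$ forces at least $c_i$ $K_i$-matches inside the $P_i$-window and so $P_i$ matches, while $n \le c_i - 2$ leaves at most $c_i - 1$ such matches and so $P_i$ cannot match. The main technical obstacle is precisely this boundary bookkeeping: the $\pm 1$ slack in the statement is exactly the worst-case discrepancy produced when the $\delta$-shift moves a single match into or out of the $P_i$-window at its left end.
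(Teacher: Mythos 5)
Your first step claims $\rho_{Q_i} = \rho_{Q_j}$ by applying the periodicity lemma to the length-$(k\log m)$ string consisting of the first half of the common tail (equivalently, the shared length-$(k\log m)$ suffix of $Q_i$ and $Q_j$). This step does not go through. The Fine--Wilf theorem requires the string to have length at least $\rho_{Q_i} + \rho_{Q_j} - \gcd(\rho_{Q_i},\rho_{Q_j})$ before one can conclude that $\gcd(\rho_{Q_i},\rho_{Q_j})$ is also a period. Here you only know $\rho_{Q_i}, \rho_{Q_j} < k\log m$, so the sum can be as large as nearly $2k\log m$, comfortably exceeding the available length $k\log m$. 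One can construct small instances (e.g.\ a string of length $10$ with periods $6$ and $9$ but no period $3$) that realise exactly this failure, and extend them to legal $Q_i, Q_j$ ending in a common tail with $\rho_{Q_i} \neq \rho_{Q_j}$. The appeal to ``minimality of $\rho_{Q_i}$ and $\rho_{Q_j}$'' does not rescue the argument, since those are minimal periods of $Q_i, Q_j$, not of the shorter shared suffix, and a suffix can easily have a smaller minimal period than the string containing it.

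Because of this gap, the subsequent phase-shift bookkeeping (step two and step three) is built on an unsupported premise: without a common step $\rho$, the $K_i$-matches and $K_j$-matches are not translates of each other by a single offset $\delta$, and the claim that the two counts agree up to a single boundary match no longer follows. The paper's own proof takes essentially the same high-level route --- it asserts $|y_i - y_j| \le 1$ from the fact that $K_i$ and $K_j$ have the same length, are both periodic, and each contains the other's period string --- but it does not rest on the false step $\rho_{Q_i} = \rho_{Q_j}$ proved via Fine--Wilf on a too-short window. To repair your argument you would need either to justify the period equality by a different route (for instance by exploiting that both periods must survive in a $\rho_{Q_i}$- and $\rho_{Q_j}$-periodic text region long enough for Fine--Wilf to apply whenever the progression is long enough to matter), or to argue the $|y_i - y_j|\le 1$ bound directly without assuming equal periods, as the paper's terse proof does.
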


\begin{proof}
Let $y_i$ be the number of matches of $K_i$ in the current
progression. Analogously, for $y_j$. The first thing to observe is
that $|y_i-y_j|<1$. This follows from the fact that $|K_i|=|K_j|$,
they are both periodic and contain each other's period string.

Assume that $y_j < c_i-1$. Therefore, as $c_i<c_j+1$, we have that
$y_j < c_i$ so $P_i$ does not match.  Instead assume that $y_j \geq
c_i+1$. Again, as $c_i>c_j-1$, we have that $y_j\geq c_i$. \qed
\end{proof}

\end{document}